\theoremstyle{plain}
\newtheorem{thm}{Theorem}
\newtheorem{prop}[thm]{Proposition}
\theoremstyle{definition}
\newtheorem{assum}{Assumption}
\theoremstyle{remark}
\newcommand{\R}{\mathbbm{R}}
\newcommand{\so}[1]{\mathfrak{so}\left( #1 \right)}
\newcommand{\SO}[1]{SO\left( #1 \right)}
\newcommand{\trace}{\text{Tr}}
\def\deff{\stackrel{\triangle}{=}}
\newcommand{\ncom}{\newcommand}
\ncom{\beqn}{\begin{eqnarray*}} \ncom{\eeqn}{\end{eqnarray*}}
\ncom{\beq}{\begin{eqnarray}} \ncom{\eeq}{\end{eqnarray}}
\newcommand\covdg{\stackrel{\mathbb{G}}{\nabla}}
\newcommand\covdgi{\stackrel{g}{\nabla}}
\newcommand\Ad{\operatorname{Ad}}
\title{\LARGE \bf Position and line-of-sight stabilization of spherical robot  using feedforward proportional-derivative geometric controller}
\author{Krishna Chaitanya Kosaraju, Arun D.
Mahindrakar, Vijay Muralidharan\\Anup K. Ekbote and Ramkrishna Pasumarthy
\thanks{Chaitanya, Vijay and Anup   are  graduate students in the Department of Electrical  Engineering,
       Indian Institute of Technology Madras, Chennai-600036,
      India. {\tt\small  kkrishnachaitanya89@gmail.com, m\_vijay\_india@yahoo.co.in, anupekbote@gmail.com}}
\thanks{Arun and Ramkrishna are   with the Department of Electrical Engineering,  Indian Institute of Technology Madras, Chennai-600036,
India {\tt\small arun\_dm,ramkrishna@iitm.ac.in}}}
\begin{document}
\maketitle
%\thispagestyle{empty} \pagestyle{empty}
%
%\tableofcontents \thispagestyle{empty}
\begin{abstract}
In this paper we present a geometric control law for position and
line-of-sight   stabilization of the nonholonomic spherical robot
actuated by three independent actuators. A simple configuration
error function with an appropriately defined transport map is
proposed to extract feedforward and proportional-derivative control
law. Simulations are provided to validate the controller
performance.

\end{abstract}
%
% \cite{Arnold}
\section{Introduction}
The application of Lie groups in Mechanics has been the subject of interest
to the control community as it provides a rich platform for the
application of geometric control techniques. The textbook
\cite{bullo}, provides comprehensive treatment of geometric methods
for mechanical systems defined on manifolds. In \cite{Jason}, the
authors present a geometric PD controller for a double-gimbal
mechanism that evolves on the torus. An output tracking for
aggressive maneuvers involving various flight modes is presented in
\cite{uav} for an unmanned quadrotor.  Mechanical systems when subjected to motion constraints, particularly nonholonomic was presented in \cite{bloch}.  In this paper, we consider a
nonholonomic mechanical system  involving the spherical robot
rolling on a horizontal plane.

The control design for spherical robot initiated with motion
planning and open-loop steering input designs with Euler-angle
parameterizations. A few notable examples are
\cite{sph23,sph13,sph1}. The study of the geometric properties of
spherical robot is a recent interest. A steering control for full
state reconfiguration based on the geometry of the sphere was
proposed in \cite{sph40}. Euler-Poincar\'{e} equations using a
coordinate-free approach were obtained in \cite{sph3,sph4,sph9} for
various actuator configurations. Geometric open-loop control
algorithms were developed in \cite{sph3} for steering the spherical
robot to the origin. Stabilizing control inputs were designed in
\cite{sph4} using the geometric model of the spherical robot for two
independent objectives, a finite-time position stabilization and a
finite-time attitude stabilization.

The control laws reported in literature are obtained by observations
on the mathematical model of the spherical robot, we intend to
identify a control objective which can be accomplished by the
currently established tools in geometric control design
\cite{bullo}. The negative result of Brockett \cite{brockett}  for
nonholonomic systems rules out asymptotic stabilization to an
equilibrium point using smooth geometric control laws. We identify
that position and line-of-sight stabilization problem is achievable
within the framework of smooth geometric control. The notion of
configuration error function and the associated transport map are
the necessary prerequisites  in applying the geometric tools
developed in \cite{bullo}. In this direction, we propose a novel
potential function for the spherical robot model to meet the control
objective of position and line-of-sight stabilization. In doing so,
we design a transport map that paves the way for the synthesis of a
feedforward proportional-derivative geometric control law.

\section{Preliminaries}
%----------------------------------
%
Let the orientation of a rigid body be denoted by $R(t) \in \SO{3}$
relative to the reference inertial frame, where $\SO{3} =\{R|R^\top R=I, det(R)=1 \}$. $\dot{R}(t) \in T_R\SO{3}$, the tangent space to  $\SO{3}$ at $R$. $\SO{3}$ is a Lie group and
$T_{I}\SO{3}\simeq \so{3}$ is the Lie algebra of the group, where
$I$ is the identity element of the group $\SO{3}$, $\so{3}$ is a
vector space formed by skew-symmetric matrices. Since $so(3)$ is
isomorphic to $\mathbb{R}^3$, we denote wedge operation by
% the wedge operator '$\wedge$', $x \in
%\R^3 \mapsto \hat{x} \in \so{3}$ defined as

 \beqn
 \hat{x} =
 \left [\begin{array}{ccc}
0 & -x_3 & x_2 \\
x_3 & 0 & -x_1 \\
-x_2 & x_1 & 0
\end{array} \right]
\label{wedge} \eeqn
for $x \in \R^3$.
% with the bracket structure
%
%\beqn [\hat{\omega},\;\hat{v}] =
%\hat{\omega}\hat{v}-\hat{v}\hat{\omega}, \;\;\;\;  \forall\;
%\hat{v},\; \hat{\omega} \in \so{3}. \eeqn
%
 Further, $\vee $ be the
inverse of the wedge operation and the Lie algebra isomorphism
between $(\R^3, \times)$ and $(\so{3},[\cdot,\;\cdot])$ is
 \beq [\hat{\omega},\;\hat{v}]^{\vee}=\omega \times v, , \;\;\;\;  \forall\;
v,\; \omega \in \R^3.\label{so3_R3_iso} \eeq
 The dual of $\so{3}$ can be identified with $\mathbb{R}^3$ using the
map $\wedge^{\ast}:\so{3}^{\ast}\rightarrow \mathbb{R}^3$. For
$\eta\in \so{3}^{\ast}$ and $\hat{\rho}\in \so{3}$, the action of
$\eta$ on $\hat{\rho}$ can be identified with the usual inner
product $\textquoteleft\cdot\textquoteleft$ in $\mathbb{R}^3$ as
$\eta(\hat{\rho})=\wedge^{\ast}(\eta)\cdot \rho$.
 Let $R,R_1 \in
\SO{3}$, the left translation map  $L_R: \SO{3}\rightarrow \SO{3}$
is defined as $L_R(R_1)=RR_1.$
In a similar way, the right translation map $R_R: \SO{3}\rightarrow
\SO{3}$ as
 $R_R(R_1)=R_1R.$ %for all $g, h \in G$.``
%\subsection{Left and Right Invariant vector fields}
From here unless stated as constant, all the variable are assumed to be time varying. A vector field $X(R) \in T_R\SO{3}$ is left invariant if $
X(RR_1)=RX(R_1), $ and similarly right invariant if
$X(R_1R)=X(R_1)R. $

Body angular velocities of a rigid body are left invariant vector
fields, while the spatial angular velocities are right invariant.
They can be identified using their velocity at the group identity
$I$ of $\SO{3}$. Let $R\in \SO{3}$, $X(R)\in T_RSO(3)$, $X(I)=\hat{v}\in T_I\SO{3}\simeq
\so{3}$.
%and $\dot{R}\in T_R\SO{3}$.
 If $v$ is body angular velocity
then $X(R)=RX(I)=R\hat{v}$, while if $v$ is spatial angular velocity
then $X(R)=X(I)R=\hat{v}R$.
 The velocity  $\dot{R}=R\hat{v}$ at  point $R$, which is equivalent to $T_R\dot{R}$,  can be defined
 using the map $T_IL_R:\so{3}\rightarrow T_R\SO{3}$ as
$T_IL_R\hat{v}$.  Accordingly, the dual of $T_IL_R$ is the map $(T_I
L_R)^{\ast}:T_R\SO{3}^{\ast}\rightarrow \so{3}^{\ast}$. Let
$\beta_R\in (T_R\SO{3})^{\ast}$. Then the action of $\beta_R$ on
$T_IL_R\hat{\omega}$ can identified with the inner product
$<\cdot,\cdot>_{\trace}$ by
$<(T_IL_R)^{\ast}\beta_R,\hat{\omega}>_{\trace}$, where $<\cdot ,
\cdot>_{\trace}$ on $\mathbb{R}^{n \times n}$ is defined as
$<A,B>_{\trace} =\frac{1}{2}\trace(A^{\top}B)$ for $A,B \in
\mathbb{R}^{n \times n}$.

The Riemannian metric $\mathbb{G}(R): T_R\SO{3}\times T_R\SO{3}
\rightarrow \mathbb{R}$, a $(0,2)-$tensor on $\SO{3}$ defined as
$\mathbb{G}(R)(X(R),Y(R))=X(R)^{\top}\mathbb{G}(R)Y(R)$
is left invariant if
 \beqn
\mathbb{G}(R)(X(R),Y(R))=\left(R\mathbb{G}(I)R^{-1}\right)(X(R),Y(R))\eeqn
where $X(R),\;Y(R)\in T_RSO(3)$. Therefore it can be seen that for left invariant vector fields
$X(R), Y(R)$,
 \beq
&&\mathbb{G}(R)(X(R),Y(R))\nonumber\\
&&=\left(R\mathbb{G}(I)R^{\top}\right)(X(R),Y(R))\nonumber  \\
&&=\left(R\mathbb{G}(I)R^{\top}\right)(RX(I),RY(I))\nonumber  \\
&&=\left(R^{\top}\left(R\mathbb{G}(I)R^{\top}\right)R\right)(X(I),Y(I))\nonumber \\
&&=\mathbb{G}(I)(X(I),Y(I))
\label{left invariant Rmetric} \eeq
which is a constant. Since $X(I), Y(I) \in T_I\SO{3}\simeq \so{3}$,
$J
\deff \mathbb{G}(I)$, a $(0,2)-$tensor on $\so{3}$.

For $\hat{w}\in \so{3}$ the adjoint map  $%
\Ad: \SO{3} \times \so{3}\rightarrow \so{3}$ is defined as
\beq
\Ad_{R}(\hat{w})=R\hat{w}R^{\top} =\widehat{(Rw)}.
 \eeq
The following general facts involving matrix operations will be
useful. For $A, B, C \in \mathbb{R}^{n \times n}$, we denote the
trace of $A$ as $\trace(A)$, the symmetric component of $A$ by
$sym(A)=\frac{A+A^{\top}}{2}$ and the skew-symmetric component as
$skew(A)=\frac{A-A^{\top}}{2}$ and if $A=A^{\top}$, $B=-B^{\top}$
then $\trace(AB)=0$. For $a,b \in \mathbb{R}^3$,
$\trace(\hat{a}\hat{b})=-2(a^{\top}b)$. It then follows that
\beqn
\begin{array}{lcl}
\trace(C \hat{a})&=& \trace((sym(C)+skew(C))\hat{a})  \nonumber\\
&=& \trace((sym(C))\hat{a})+ \trace(skew(C))\hat{a}) \\
&=& 0 + \trace(skew(C))\hat{a})  \\
&=& -2 ((skew(C))^{\vee} \cdot\; a)
\end{array}
\eeqn
Therefore $<\hat{a},\hat{b}>_{\trace} = a\cdot b$.
\section{Modeling of spherical robot}
\begin{figure}[h!]
\begin{center}
\includegraphics[width=.6\linewidth]{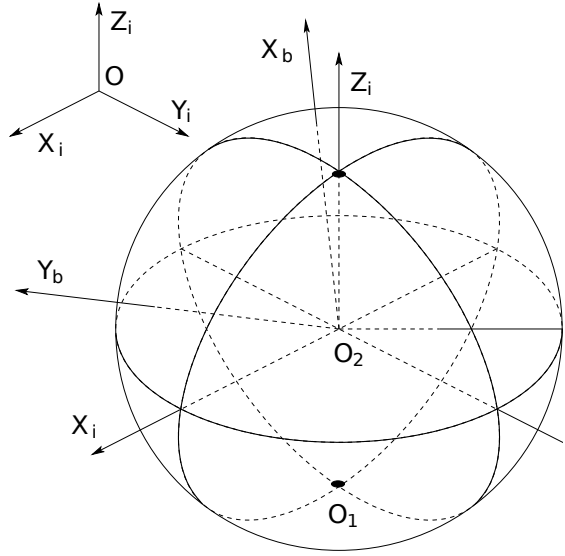}
 \caption{Schematic of the spherical robot} \label{schematic}
\end{center}
\end{figure}

The spherical robot schematic shown in Figure \ref{schematic}
consists of a spherical shell of radius $r$ and mass $m$ moving in a
horizontal plane. The center-of-mass of the robot is assumed to
coincide with the geometric center. The position coordinates of the
spherical robot are denoted by $(x,y)$, which are the coordinates of
the point $O_1$ with respect to $O$.
  Let
$J=\text{diag}(J_1,J_2,J_3) \in\mathbb{R}^{3\times 3}$ be the
 moment-of-inertia matrix of the robot with respect
to the body frame centered at $O_2$. We make the following
assumption.
\begin{assum}
The principal moments of inertia satisfy $0<J_1<J_2<J_3$.
\label{assum}
\end{assum}
 The sphere has three independent
torques acting on the body-coordinate frame. The orientation of body
frame $(X_b,Y_b,Z_b)$ of the robot with respect to an inertial frame
$(X_i,Y_i,Z_i)$ is given by a matrix $R\in \mathrm{SO}(3)$.
The no-slip constraints are given by
\beq v=\left[ \begin{array}{c} \dot{x} \\ \dot{y} \\ \dot{z}
\end{array} \right] = \Omega \times \left[ \begin{array}{c} 0 \\ 0
\\ r \end{array} \right] =r \;R\omega\times e_3, \label{cons1} \eeq
where, $\omega\in \mathbb{R}^3$ denotes the body angular velocity
and $\Omega\in \mathbb{R}^3$ is the spatial angular velocity of the
robot. Denoting the rows of $R$ by $r_1, r_2, r_3$, the kinematics
of the spherical robot is given by
\beq
\begin{array}{lcl}
\dot{x}&=&r(\omega \cdot r_2)\\
\dot{y}&=&-r(\omega \cdot r_1)\\
\dot{R}&=&R\widehat{\omega}.
\end{array}
\label{kin} \eeq
Let $X,Y \in T_R\SO{3}$, an Levi-Civita affine connection on $\SO{3}$ is left
invariant if it satisfies
\beq
\covdg_{T_IL_RX(I)}T_IL_RY(I)&=&T_IL_R\covdg_{X(I)}Y(I)
\label{leftic}\eeq
for all $R \in \SO{3}$  and let $\{e_1, e_2,
e_3\}$ span $\R^3$. Since $\R^3$ is naturally isomorphic to
$\so{3}$, it implies that $\mathrm{span}\{\hat{e}_1, \hat{e}_2,
\hat{e}_3\}= \so{3}$. It then follows for $X^i, Y^i \in \R$
we define $X(R)=X^i\hat{e}_i(R), Y(R)=Y^i\hat{e}_i(R)$ and
\eqref{leftic} can be simplified as follows
\beqn
\begin{array}{lcl}
&&\covdg_{X(R)}Y(R)\\
&&=T_IL_R\covdg_{X(I)}Y(I)\\
&&=R.\left(\covdg_{X^i\hat{e}_i(I)}Y^i\hat{e}_j(I)\right)\\
&&=R.\left(   DY(I).X(I) +
X^iY^j\covdg_{\hat{e}_i(I)}\hat{e}_j(I)\right) \end{array}
\label{lic1}
 \eeqn
where $DY$ is Jacobian of $Y$. From \eqref{left invariant Rmetric} we see that $\mathbb{G}(I)\simeq
J$ represents an inner product on $\so{3}$, and
$G(I)(\hat{e}_i,\hat{e}_j)$ has a constant value which renders
$\covdgi :so(3) \times so(3)\mapsto so(3)$  a bilinear map. It now follows as
\beq &&\covdg_{X(R)}Y(R)\nonumber\\
&&=T_IL_R\left(DY(I).X(I)+X^iY^j\covdgi_{\hat{e}_i}\hat{e}_j\right)\nonumber \\
&&=T_IL_R\left(DY(I).X(I)+\covdgi_{X^i\hat{e}_i}Y^j\hat{e}_j\right)\nonumber \\
&&=T_IL_R\left(DY(I).X(I)+\covdgi_{X(I)}Y(I)\right) \label{lic}
\eeq
 In \eqref{lic}, we
observe that $X(I)$, $Y(I)$ and $\covdgi_{X(I)}Y(I) \in \so{3}$. By
letting $\hat{\omega}=\omega^i\hat{e}_i$,
$\dot{R}=T_IL_{R}\hat{\omega}(t)=\omega^i\hat{e}_i(R)$,
where $\omega^i\in \R$ also known as pseudo velocities. Let
$\hat{\tau}\in (\so{3})^{\ast}$ be  the covector representing the
external torque acting on the robot.
Next,  the covariant derivative of $\dot{R}$ is
 \beq
 \begin{array}{lcl}
\covdg_{\dot{R}}\dot{R}&=&R\left(\frac{d}{dt}\hat{\omega}+\covdgi_{\hat{\omega}}\hat{\omega}\right)\\
&=&T_IL_R J^{-1}\hat{\tau}. \end{array}\label{cov_der_R}
\eeq
 From \eqref{cov_der_R}, we obtain the well-known attitude
dynamics governed by Euler-Poincar\'{e} equations of motion
 \beq \dot{\omega} &=& -J^{-1} (\omega\times J \omega ) + J^{-1}\tau\label{dyn} \eeq
where $\tau\in \R^3$ is the external torque about the body-axis of
the robot.
%
%----------------------------------------------------------------------------------
\section{Position and line-of-sight stabilizing controller}
Without loss of generality we assume that the desired position of
the robot is the origin and the line-of-sight is $Z_b$. The control
objective is to stabilize the position of the robot to the origin  and
the line-of-sight (fixed to the body) $Z_b$ to coincide with the $Z_i$-axis of the
inertial frame. In other words, the objective is to stabilize the closed loop system to submanifold $E=\{(x,y,R,\omega)\in \R^2\times \SO{3}\times \R^3 :x=0$, $y=0$ and
$\omega= R^\top e_3\}$. We note that $\omega= R^\top e_3 \Rightarrow \dot{\omega}=0$.

Before we proceed to derive the control to meet the aforementioned
objective, consider the configuration  error function
$\psi:\mathbb{R}^2\rightarrow \mathbb{R}$
\beqn
\psi(x,y)=k_p(x^2+y^2), k_p>0 \; \text{is\;\;free}.
\eeqn
Using $\psi$, the position of the robot can be stabilized to the
origin of the $(X,Y)$ plane. The controller synthesis can proceed as
follows.

 The derivative of $\psi(x,y)$ with respect to time
along the trajectories of \eqref{kin} is given by,

\beq
\begin{array}{lcl}
\frac{d}{dt}\psi(x,y)&=&k_p(x\dot{x}+y\dot{y}) \\
&=&k_p(xr_2-yr_1)\cdot \omega
\end{array}\label{error_function_posi}
\eeq
   Equation
\eqref{error_function_posi} can be rewritten as
\beqn
\begin{array}{lcl}
\frac{d}{dt}\psi(x,y)&=&k_p(x\dot{x}+y\dot{y})\\
&=&k_pr(x(r_2\cdot\omega-r_2\cdot r_3)-y(r_1\cdot\omega-r_1\cdot r_3))\\
&=&k_pr(xr_2\cdot(\omega-r_3)-yr_1\cdot(\omega-r_3))\\
&=&k_pr(x r_2-y r_1)\cdot(\omega-r_3)
\end{array}
\eeqn
which implies that $\frac{d}{dt}\psi=d\psi e_{\omega}$, where
$e_{\omega}\deff (\omega-r_3)$ is the velocity error. Hence the
error function $\psi$ is compatible with $e_{\omega}$. If
$\omega_d=r_3$, then $\Omega_d=e_3$, where the subscript $d$ refers
to the desired values.

The right transport map ${\cal T}: \SO{3} \times T_{R_d}\SO{3}
\rightarrow T_{R}\SO{3}\times \SO{3}$ is defined as
\beqn {\cal T}(R,R_d)(\dot{R}_d)= \dot{R}_dR_d^\top R. \label{rtm}
\eeqn
 Here, $\widehat{\Omega}_d=\dot{R}_dR_d^\top$ and $R_d$ satisfies
 $R_de_3=e_3$. Next, we define the velocity error using the
 transport map ${\cal T}$.
 \beq
 \begin{array}{lcl}
 {\cal T}(\dot{R}_d)&=&\widehat{\Omega}_d R\\
 &=&RAd_{R^\top}\widehat{\Omega}_d.
\end{array}
 \eeq

 The following derivatives are useful in deriving the covariant
derivative of right transport map. For $\hat{v}\in \so{3}$,
\beqn\frac{d}{dt}\Ad_{R}\hat{v}
 &=& \frac{d}{dt}R \hat{v} R^\top \\
&=& R\left(R^\top \dot{R}\hat{v}-\hat{v}R^\top \dot{R}\right)R^\top \\
&=& R(\hat{\omega} \hat{v}-\hat{v} \hat{\omega})R^\top \\
&=& R[\hat{\omega} ,\;\hat{v}]R^\top \\
&=& \Ad_{R}[\hat{\omega},\; \hat{v}]
 \eeqn
and $\frac{d}{dt}\Ad_{R^\top R_d}\hat{\omega}_d$ can be
expressed as
 \beqn
\begin{array}{lcl}&=& \left(\frac{d}{dt}(R^\top R_d)\right)\hat{\omega}_d  (R^\top_dR)+(R^\top R_d)\hat{\omega}_d\left(\frac{d}{dt}(R^\top_dR)\right)\\ &&+\Ad_{R^\top R_d}\hat{\dot{\omega}}_d \\
%&=&
%\left((R^\top R_d)\hat{\omega}_d (R^\top R_d)^\top \right)(R^\top \dot{R})\\ &&-(R^\top \dot{R})\left((R^\top R_d)\hat{\omega}_d (R^\top R_d)^\top  \right)\\ &&+\Ad_{R^\top R_d}\hat{\dot{\omega}}_d \\
&=&
\left((\Ad_{R^\top R_d}\hat{\omega}_d \right)(R^\top \dot{R})-(R^\top \dot{R})\left(\Ad_{R^\top R_d}\hat{\omega}_d \right)\\ &&+\Ad_{R^\top R_d}\hat{\dot{\omega}}_d \\
&=& \left[\Ad_{R^\top
R_d}\hat{\omega}_d,\hat{\omega}\right]+\Ad_{R^\top
R_d}\hat{\dot{\omega}}_d.
\end{array}
\eeqn
Thus, the covariant derivative of the  right transport map\\
$\covdg_{\dot{R}} {\cal T}(\dot{R}_d)$  is
 \beq
&=&\covdg_{\dot{R}}RAd_{R^\top }\widehat{\Omega}_d \nonumber\\
&=&R\left(\frac{d}{dt}Ad_{R^\top}\widehat{\Omega}_d+\covdgi_{\hat{\omega} }Ad_{R^\top }\widehat{\Omega}_d\right)\nonumber\\
&=&R\left(\left[Ad_{R^\top}\widehat{\Omega}_d,\hat{\omega}\right]+Ad_{R^\top}\widehat{\dot{\Omega}}_d+\covdgi_{\hat{\omega}}Ad_{R^\top
}\widehat{\Omega}_d\right)\nonumber\\
&=&R\left(\left[Ad_{R^\top}\widehat{\Omega}_d,\hat{\omega}\right]+\covdgi_{\hat{\omega}}Ad_{R^\top
}\widehat{\Omega}_d\right)\nonumber\\
&=&R \widehat{f}_{ff} \label{ff} \eeq
The last step follows by noting that ${\Omega}_d=e_3$.

We next present the feedforward and proportional-derivative
controller in $\mathbb{R}^3$.  For $v,\omega \in \mathbb{R}^3$, the
following holds
\beqn \left(\covdgi_{\hat{v}}\hat{\omega}\right)^{\vee} &=&
\frac{1}{2}(v\times \omega)+\frac{1}{2}J^{-1}\left(v \times
J\omega-Jv \times \omega\right) \eeqn
and from \eqref{so3_R3_iso} it follows
 \beqn
\left(\hat{v}\hat{\omega}-\hat{\omega}\hat{v}\right)=\left[\hat{v},
\hat{\omega}\right]_{\so{3}} &=&
\widehat{\left[v,\omega\right]}_{\mathbb{R}^3}=\widehat{\left(v
\times \omega\right)}. \eeqn
Thus $f_{ff}$ in \eqref{ff}  and $f_{pd}$ can be written as

%
% In step two, when $(x,y)=(0,0)$ and spatial angular
%velocity $\omega_s=e_3$ the remaining manifold is $\mathbb{S}^1$ i.e
%rotation along z-axis i.e $e_3$ to make $R=I$. Now we can use
%\eqref{dynamics}  $f_{pd}= -I^{-1}\left(k_p \left(skew(R_d^\top
%R(t))\right)^V+k_v(\omega-r_3)\right)$. And we derive the
%feed-forward term $f_{ff}$ as follows, %
%
\beq
\begin{array}{lcl}
 f_{ff}&=&R^\top e_3\times\omega+
\frac{1}{2}\left(\omega\times R^\top e_3\right.\\
&&\left.+J^{-1}\left(\omega\times J R^\top e_3-J\omega \times R^\top
e_3\right)\right)\\
\end{array}
\label{ff_control} \eeq
\beq
\begin{array}{lcl}
f_{pd}&=&-J^{-1}(k_pd\psi+k_v e_{ \omega})\\
&=&-J^{-1}(k_p r R^\top(x e_2-y e_1)\\
&&+k_v (\omega-R^\top e_3)).
 \end{array}
\label{pd_control} \eeq
With $\tau=J(f_{ff}+f_{pd})$, the closed-loop dynamics \eqref{dyn},
\eqref{ff_control} and \eqref{pd_control} is
 \beq \dot{\omega} &=& -J^{-1} (\omega\times J \omega ) + f_{ff}+f_{pd}.\label{closed_loop_dyn} \eeq
%
%%%---------------------------Lyapunov stability proof-------------
\begin{prop} Consider a spherical robot satisfying assumption \ref{assum}. Then, the closed-loop
system \eqref{closed_loop_dyn} is  asymptotically stable with respect to $(x,y, R^\top e_3)$ uniformly in $\omega$.
\end{prop}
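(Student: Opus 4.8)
The plan is to build a Lyapunov function from the kinetic energy of the velocity error together with the configuration error function $\psi$, and to extract asymptotic stability of the target set $E$ by LaSalle's invariance principle. First I would introduce the velocity error field $E_{vel} := \dot R - {\cal T}(\dot R_d)$ along a closed-loop trajectory. Since ${\cal T}(\dot R_d)=\widehat{\Omega}_dR$ with $\Omega_d=e_3$, its body representation is $R^{-1}E_{vel}=\widehat{\omega-R^\top e_3}=\hat e_\omega$, so $E_{vel}$ encodes exactly the velocity error $e_\omega=\omega-R^\top e_3$ appearing in \eqref{error_function_posi}. The candidate is $V=\tfrac12\,\mathbb{G}(R)(E_{vel},E_{vel})+\psi(x,y)=\tfrac12\,e_\omega^\top J e_\omega+\psi(x,y)$, which is nonnegative, vanishes exactly on $E$, and is radially unbounded in the transverse coordinates $(x,y,e_\omega)$.

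The central computation is $\dot V$. Rather than expanding the gyroscopic terms directly, I would use that $\covdg$ is the Levi-Civita (hence metric-compatible) connection, so that $\frac{d}{dt}\big[\tfrac12\mathbb{G}(R)(E_{vel},E_{vel})\big]=\mathbb{G}(R)(\covdg_{\dot R}E_{vel},E_{vel})$. By \eqref{cov_der_R} with $\tau=J(f_{ff}+f_{pd})$ one has $\covdg_{\dot R}\dot R=R\,\widehat{(f_{ff}+f_{pd})}$, while by construction \eqref{ff} gives $\covdg_{\dot R}{\cal T}(\dot R_d)=R\hat f_{ff}$; subtracting yields the clean error equation $\covdg_{\dot R}E_{vel}=R\hat f_{pd}$. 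This is the payoff of the transport-map/feedforward design: the drift and gyroscopic terms are absorbed by $f_{ff}$. Consequently $\frac{d}{dt}\big[\tfrac12 e_\omega^\top J e_\omega\big]=(Jf_{pd})^\top e_\omega=-(k_p\,d\psi+k_v e_\omega)^\top e_\omega$, and since $\psi$ is compatible with $e_\omega$ (i.e.\ $\dot\psi=d\psi\,e_\omega$ from \eqref{error_function_posi}), the $d\psi$ cross-terms cancel once the scalar gains in $\psi$ and in \eqref{pd_control} are matched, leaving $\dot V=-k_v\abs{e_\omega}^2\le 0$.

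With $\dot V\le 0$ I obtain boundedness of $(x,y,e_\omega)$ from radial unboundedness of $V$, and since $R\in\SO3$ is compact the whole trajectory is precompact, so LaSalle applies. The key remaining step, and the part I expect to be the main obstacle, is characterizing the largest invariant set contained in $\{\dot V=0\}=\{e_\omega=0\}$. On such an invariant trajectory $e_\omega\equiv 0$ forces $\frac{d}{dt}e_\omega\equiv 0$; writing the error equation in body coordinates as $\dot e_\omega+(\covdgi_{\hat\omega}\hat e_\omega)^\vee=f_{pd}$ and using bilinearity of $\covdgi$ (so the connection term vanishes when $e_\omega=0$) gives $f_{pd}=0$, hence $k_p\,d\psi=0$, i.e.\ $x r_2-y r_1=0$. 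Because $r_1,r_2$ are orthonormal rows of $R\in\SO3$ and hence linearly independent, this forces $x=y=0$. Thus the largest invariant set is exactly $E$, and LaSalle yields convergence to $E$; combined with the Lyapunov stability furnished by $V$ this gives asymptotic stability with respect to $(x,y,R^\top e_3)$. Since $V$ and $E$ are invariant along the fiber coordinate (the residual spin of $\omega$ about the vertical, equivalently the free $R$), the conclusion holds uniformly in $\omega$; the argument uses only $J\succ 0$ (positive definiteness of the metric), with Assumption \ref{assum} available should any spurious invariant set need to be excluded.
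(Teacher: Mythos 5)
Your proposal is correct and follows essentially the same route as the paper: the same Lyapunov function $V=\tfrac12\,\mathbb{G}(R)(R\hat e_\omega,R\hat e_\omega)+\psi(x,y)$, the same cancellation of drift terms via the transport-map feedforward $f_{ff}$ leaving $\dot V=-k_v\abs{e_\omega}^2$, and the same LaSalle argument using independence of $r_1,r_2$ to force $x=y=0$ on the invariant set. Your write-up is in fact slightly tighter in two places the paper leaves implicit --- the explicit error equation $\covdg_{\dot R}E_{vel}=R\hat f_{pd}$ via metric compatibility, and the step deducing $f_{pd}=0$ on invariant trajectories --- and you correctly observe that only $J\succ 0$ (not the full Assumption~\ref{assum}) is actually used.
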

\begin{proof}
Let $e_R=R\hat{e}_w$, Consider the candidate Lyapunov function
 \beqn
V&=&\frac{1}{2}\mathbb{G}(R)(e_R,e_R)+\psi(x,y)\\
&=&\frac{1}{2}\mathbb{G}(R)(R\hat{e}_{\omega},R\hat{e}_{\omega})+\psi(x,y)\\
&=&\frac{1}{2}\mathbb{G}(I)(\hat{e}_{\omega}(I),\hat{e}_{\omega}(I))+\psi(x,y).
\eeqn
 The derivative of $V$ with respect to
time  along the trajectories of the closed-loop system
\eqref{closed_loop_dyn} is
\beqn
\dot{V}&=&\mathbb{G}(I)\left(\hat{e}_{\omega}(I),\covdg_{\hat{\omega}}\hat{e}_{\omega}(I)\right)+\dot{\psi}(x,y)\\
&=&\mathbb{G}(I)\left(\hat{e}_{\omega}(I),\covdg_{\hat{\omega}}(\hat{\omega}-Ad_{R^\top}\widehat{e}_3)(I)\right)+\dot{\psi}\\
&=&G(I)\left(\hat{e}_{\omega}(I),(\covdg_{\hat{\omega}}\hat{\omega}-\covdg_{\hat{\omega}}Ad_{R^\top}\widehat{e}_3)(I)\right)+\dot{\psi}\\
&=&J\left(e_{\omega},(\frac{d}{dt}\hat{\omega}+\covdgi_{\hat{\omega}}\hat{\omega}-\hat{f}_{ff})^{\vee}\right)+\dot{\psi}\\
%&=&J\left(\hat{e}_{\omega}(I),\hat{f}_{pd}\right)+\dot{\psi}(x(t),y(t))\\
&=&J\left(e_{\omega},f_{pd}\right)+\dot{\psi}\\
&=&J\left(e_{\omega},-J^{-1}(d\psi+k_v e_{\omega})\right)+\dot{\psi}\\
&=&I\left(e_{\omega},-d\psi-k_v e_{\omega}\right)+\dot{\psi}\\
&=&-k_ve_{\omega}^\top e_{\omega}-e_{\omega}^\top d\psi+\dot{\psi}\\
&=&-k_ve_{\omega}^\top e_{\omega}\le 0.
%&=&J(\hat{e}_{\omega}(I),-J^{-1}(\widehat{d\psi}+k_v\hat{e}_{\omega})(I))+\dot{\psi}(x(t),y(t))\\
%&=&\left<\hat{e}_{\omega}(I),-(\hat{d\psi}+k_v\hat{e}_{\omega})\right>_{Tr}+\dot{\psi}(x(t),y(t))\\
%&=&\left<\hat{e}_{\omega}(I),-\widehat{d\psi}\right>_{Tr}+\left<\hat{e}_{\omega}(I),-k_v\hat{e}_{\omega}\right>_{Tr}+\dot{\psi}(x(t),y(t))\\
%&=&\left<e_{\omega},-d\psi\right>_{\mathbb{R}3}-k_v\left<e_{\omega},e_{\omega}\right>_{\mathbb{R}^3}+\dot{\psi}(x(t),y(t))\\
%%&=&\left<e_{\omega},-d\psi\right>\\
%&=&-k_v\left<e_{\omega},e_{\omega}\right>_{\mathbb{R}^3}\le 0.
%
\eeqn
Let $L\deff \{(x,y,R,\omega)\in \R^2\times \SO{3}\times
\R^3:V(x,y,R,\omega)\le c, c>0\}$ is compact, connected and contains
$E$. Consider the residual set $S\deff\{(x,y,R,\omega)\in
L:\dot{V}=0\}$. Let $(x,y,R,\omega)\in S \implies
 \omega=R^\top e_3,\dot{\omega}=0$. Since  $r_1$ and $r_2$ are independent, from \eqref{error_function_posi} it follows that $k_p(xr_2-yr_1)=0$
if and only if  $x=0$ and $y=0$. Thus the largest
invariant set in $S$ is $E$. Thus, by LaSalle's invariance
principle, all trajectories originating in $L$ approach $E$
asymptotically.
\end{proof}
%%%---------------------------

%
Thus the controller stabilizes the robot to the origin of the
$(X,Y)$ plane at which the robot spins about its local vertical axis
($Z_b$-axis) at a constant angular velocity.
\section{SIMULATIONS}
The system parameters used for simulation is $r=0.4\
\mathrm{m},J=\mathrm{diag}(0.3,0.4,0.5)\ \mathrm{kg} \mathrm{m}^2$.
The control gains in \eqref{closed_loop_dyn} are chosen as
$k_p=5,k_v=1$.
The time-response of the closed-loop  with the initial condition
$x(0)=4\ \mathrm{m},y(0)=3\ \mathrm{m},
R(0)=\left[\begin{array}{ccc}
1 & 0 & 0 \\
 0 & \frac{1}{\sqrt{2}} & -\frac{1}{\sqrt{2}} \\
 0 & \frac{1}{\sqrt{2}} & \frac{1}{\sqrt{2}}
\end{array}  \right],
\omega (0) = (0,0,0)\ \mathrm{rad/s} $ is shown in Figure
\ref{fig:dynstab} and the $(x,y)$ trajectory is shown in Figure
\ref{fig:dynstab2}.

\begin{figure}
\centering
\includegraphics[scale=0.9]{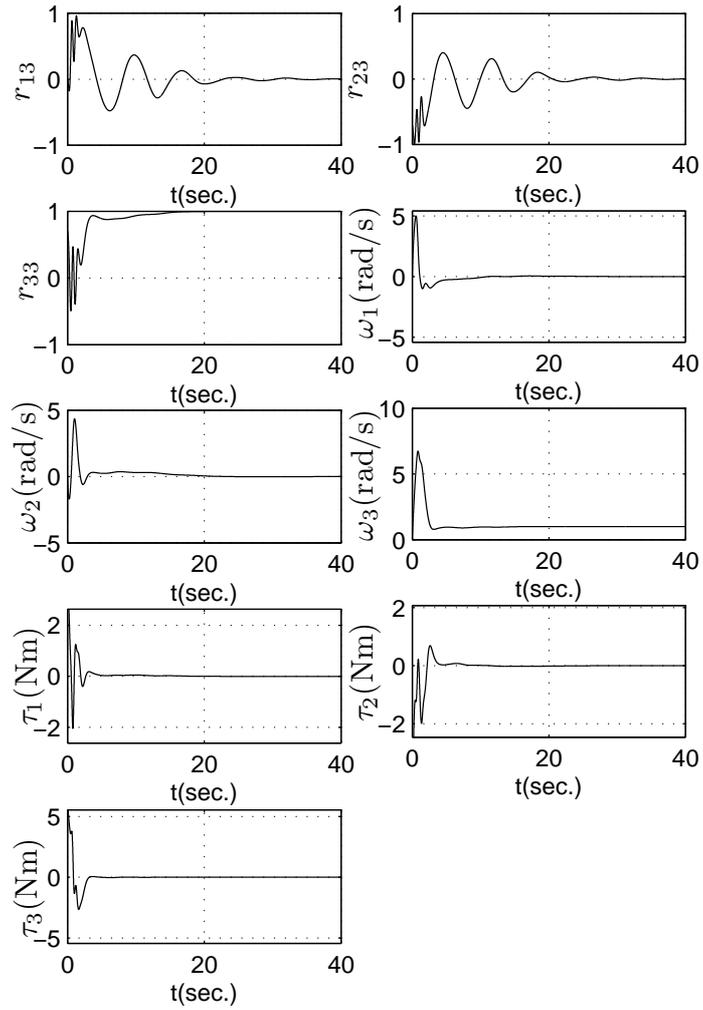}
      \caption{Time-response of attitude dynamics}
\label{fig:dynstab}
\end{figure}
\begin{figure}
\centering
\includegraphics[scale=0.7]{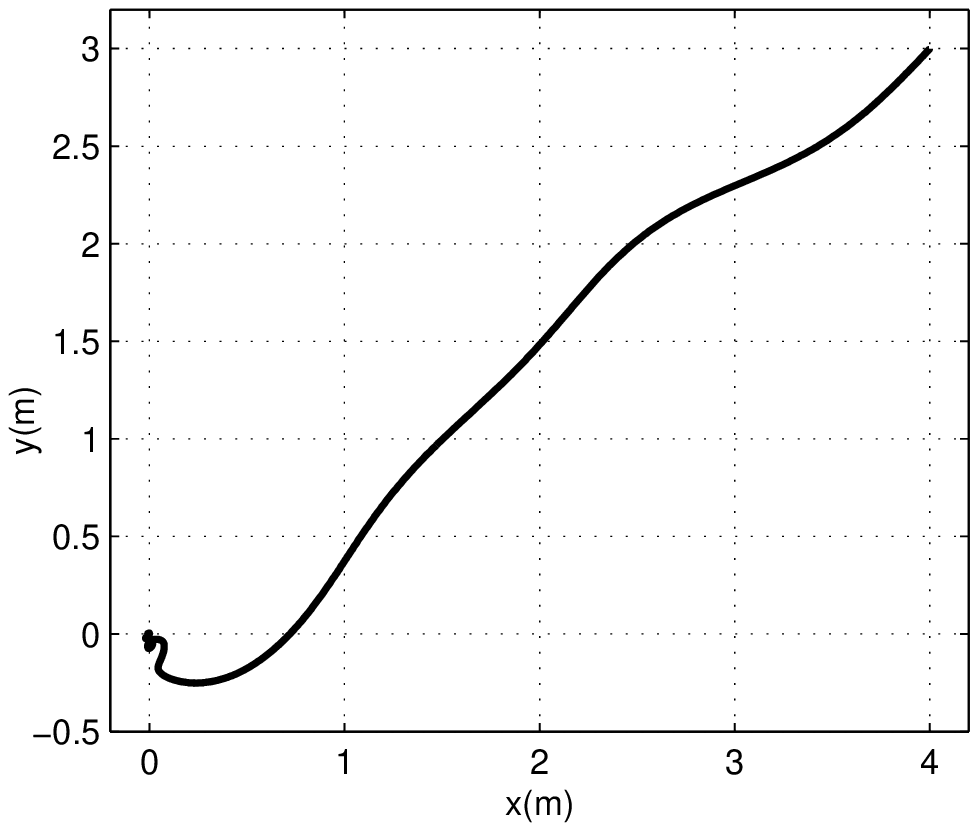}
      \caption{$(x,y)$ trajectory}
\label{fig:dynstab2}
\end{figure}

The simulation is repeated with   $R(0)=\left[\begin{array}{ccc}
1 & 0 & 0 \\
 0 & -1 & 0 \\
 0 & 0 & -1
\end{array}  \right] $ while all other initial condition remaining the same.  The time-response is shown in Figure
\ref{fig:dynstab3} and the $(x,y)$ trajectory is shown in Figure
\ref{fig:dynstab4}.

\begin{figure}
\centering
\includegraphics[scale=0.9]{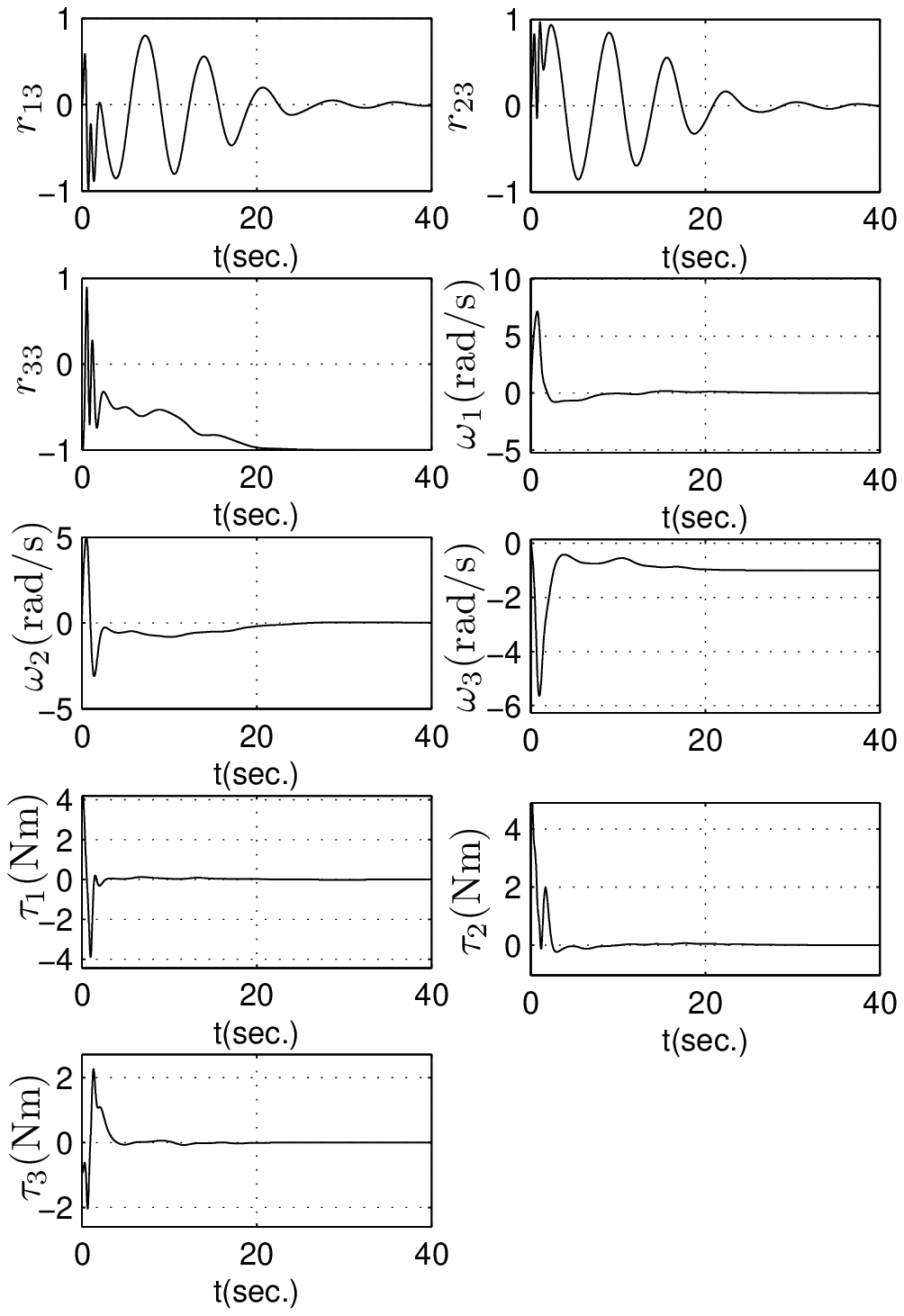}
      \caption{Time-response of attitude dynamics}
\label{fig:dynstab3}
\end{figure}
\begin{figure}
\centering
\includegraphics[scale=0.7]{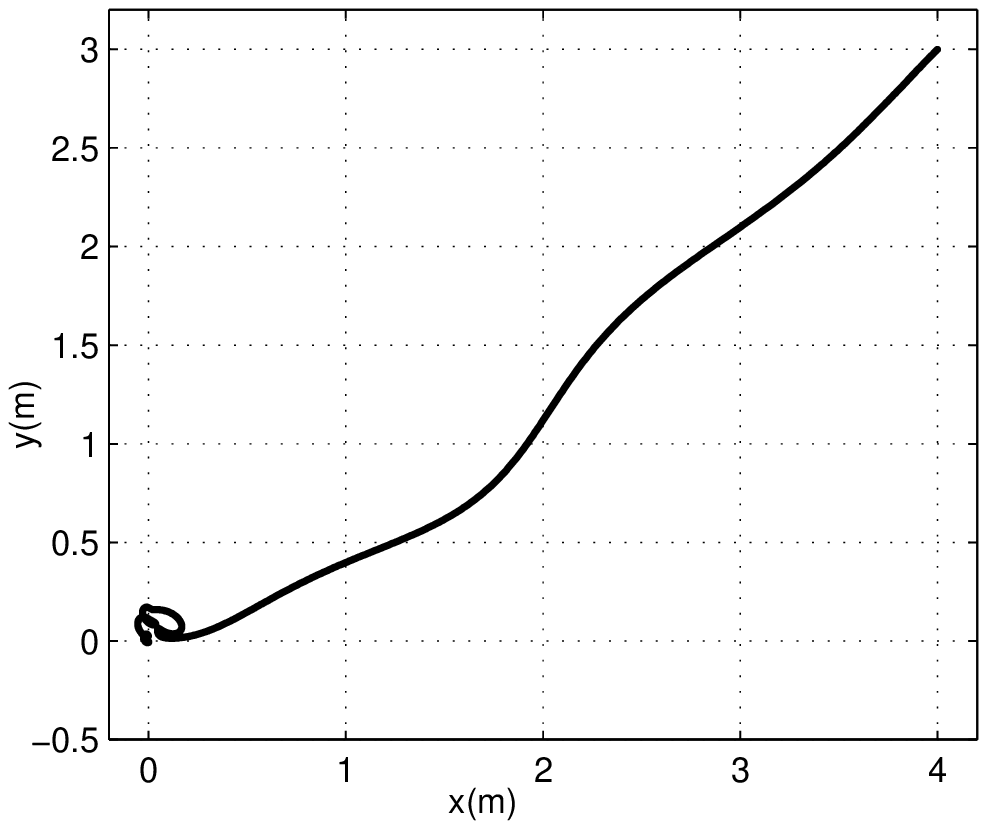}
      \caption{$(x,y)$ trajectory}
\label{fig:dynstab4}
\end{figure}

A consequence of the control law  is the regulation of $\Omega$ to
$e_3$, which implies a) $\omega=e_3,r_3=e_3$ as seen in Figure
\ref{fig:dynstab} or  b) $\omega=-e_3,r_3=-e_3$ as seen in Figure
\ref{fig:dynstab3}.

\section{Conclusions}
In this paper we have presented a smooth geometric controller to
asymptotically stabilize the system to a smooth submanifold. This results in
the robot reaching the origin of the plane while the robot spins
with constant angular velocity about its local spin-axis, which by
design is the body $Z_b$-axis coincident with the inertial
$Z_i$-axis. This control strategy can be used in line-of-sight
application for payload pointing, such as a camera mounted inside
the sphere.
%-------------------------------------------
%\addtolength{\textheight}{-6cm}
%
\bibliographystyle{authoryear}
\bibliography{ref}
\end{document}